\documentclass[11pt]{article}

\usepackage[margin=1in]{geometry}
\usepackage{amsmath,amssymb,amsthm}
\usepackage{booktabs}
\usepackage{graphicx}
\graphicspath{{figures/}}
\usepackage{hyperref}
\usepackage{natbib}
\theoremstyle{plain}
\newtheorem{lemma}{Lemma}

\theoremstyle{remark}

\newcommand{\kstar}{k^{\ast}}
\newcommand{\knaive}{k_{\mathrm{naive}}}
\newcommand{\kstarM}{k^{\ast}_{\mathrm{M}}}
\newcommand{\knaiveM}{k_{\mathrm{naive},\mathrm{M}}}
\newcommand{\kstarJ}{k^{\ast}_{\mathrm{J}}}
\newcommand{\knaiveJ}{k_{\mathrm{naive},\mathrm{J}}}
\newcommand{\nz}{n_z}
\newcommand{\La}{\Lambda_\alpha}

\title{A closed-form sample size correction for always-valid inference with optional stopping}
\author{M\aa rten Schultzberg\\
Experimentation Platform team, Confidence, Spotify\\
\texttt{mschultzberg@spotify.com}}
\date{\today}

\begin{document}

\maketitle

\begin{abstract}
Sequential tests that allow continuous monitoring are common in A/B
experimentation. Power calculations for these tests require simulations that are
hard to scale across many metrics on an experimentation platform. Instead, a
common sizing heuristic inflates the fixed-sample size until the marginal
rejection probability at the planned endpoint reaches $1-\beta$. This
last-point rule is conservative because always-valid (AV) power is the
probability of a boundary crossing
at any time during the run, not at the endpoint alone. We give a
closed-form correction factor $\kstar(\alpha, \beta, t_0)$ expressed in
elementary functions and the bivariate normal CDF, where $t_0 = m/\nz$ is
the burn-in fraction. The closed-form approximation depends on the
boundary only through its value and slope at the planned endpoint and can
be evaluated for any smooth concave boundary. We work out three cases:
the confidence sequences of \citet{wskr2023} and \citet{maharaj2023},
and the mixture sequential probability ratio test of
\citet{johariopre2022}. Setting the total sample size to $\kstar \cdot
\nz$, where $\nz$ is the fixed-sample size for allocation ratio $r$,
hits empirical power within approximately $3$ percentage points of
target in Gaussian simulations. The correction factor depends on the
allocation ratio $r$ only through $t_0 = m/\nz(r)$.
We study sensitivity to the burn-in parameter and show that the
correction saves $8$--$20\%$ of the last-point sample budget across the
operating range.
\end{abstract}

\section{Introduction}
\label{sec:intro}

Sequential tests that control type-I error under continuous monitoring without a
pre-determined max sample size are widely deployed on experimentation platforms
\citep{wskr2023,maharaj2023,johariopre2022}. They allow experiments to stop
early when evidence is clear, but sizing them correctly requires
simulation-based power calculations that are expensive at platform scale. A
common heuristic is the last-point rule \citep{growthbookdocs}, which inflates
the fixed-sample size $\nz$
until the marginal rejection probability at the planned endpoint reaches
$1-\beta$. This heuristic is conservative because AV power includes
boundary crossings before the endpoint, not only at it. At $(\alpha, 1-\beta) =
(0.05, 0.80)$ the rule delivers empirical power between $0.86$ and $0.88$,
oversizing by seven to nine percentage points.

We give a closed-form factor $\kstar(\alpha, \beta, t_0)$ such that
$n = \kstar \cdot \nz$ approximately hits target power without simulation,
where $t_0 = m/\nz$ is the burn-in fraction. The closed-form approximation
depends on the boundary only through its value and slope at the planned
endpoint and can be evaluated for any smooth concave boundary. We work out
three cases: the WSKR confidence sequence \citep{wskr2023}, the Maharaj
confidence sequence \citep{maharaj2023}, and the \citet{johariopre2022}
mSPRT. The saving over the last-point rule ranges from $8\%$ to $20\%$
across boundaries, burn-in values, and the $(\alpha, \beta)$ grid.

\section{Setup}
\label{sec:setup}

\paragraph{Two-sample experiment.}
Two arms with control mean $\mu_C$, treatment mean $\mu_T$, true effect
$\delta = \mu_T - \mu_C$, and common variance $\sigma^2$. Observations
are i.i.d.\ within arm with finite $(2+\varepsilon)$-th moment.
Allocation ratio $r = n_C/n_T \ge 1$, total sample size $n = n_T(1+r)$,
standardised effect $\Delta = \delta/\sigma$. We write
$\delta_{\mathrm{MDE}}$ for the planned minimum detectable effect.
The hypotheses are $H_0: \delta \le 0$ versus $H_1: \delta > 0$ at
one-sided level $\alpha$ and target power $1-\beta$. The fixed-sample
$z$-test sample size is \citep{cohen1988,zhouluschallah2023}
\begin{equation}
  \nz = \frac{(1+r)^2}{r} \cdot
  \frac{(z_\alpha + z_\beta)^2 \sigma^2}{\delta_{\mathrm{MDE}}^2},
  \label{eq:nz}
\end{equation}
where $z_p = \Phi^{-1}(1-p)$. Under equal allocation,
$\nz \cdot \Delta_{\mathrm{MDE}}^2 / 4 = (z_\alpha + z_\beta)^2$,
where $\Delta_{\mathrm{MDE}} = \delta_{\mathrm{MDE}}/\sigma$.

\paragraph{Sequential boundaries.}
We consider three sequential testing boundaries. The burn-in
parameter $m$ is the smallest \emph{total} sample size at which the
confidence sequence is evaluated. In practice, $m$ is a small
constant (e.g.\ $m = 20$ total, giving $10$ per arm at $r = 1$)
chosen so that the central limit approximation is reliable, typically
$m \ll \nz$. The value of $m$ affects both the boundary shape and the
monitoring window; Section~\ref{sec:sensitivity} studies its impact
on $\kstar$. For general $r$, define the standardised cumulative process
\begin{equation}
  Z_n := \frac{\hat\delta_n \sqrt{r\,n}}{\sigma\,(1+r)},
  \qquad
  \hat\delta_n = \hat\mu_{T,n} - \hat\mu_{C,n},
  \label{eq:Zn}
\end{equation}
where $n$ is the total sample size. Since
$\mathrm{Var}(\hat\delta_n) = \sigma^2(1+r)^2/(rn)$, we have
$\mathrm{Var}(Z_n) = 1$ for every $r$. At $r = 1$ this reduces to
$Z_n = \sqrt{n}\,\hat\delta_n/(2\sigma)$. The rejection event by
sample size $n$ is $\{\tau \le n\}$ where
$\tau := \inf\{n \ge m : Z_n > f(n)\}$ and $f$ is the boundary
specific to the confidence sequence.

\textit{WSKR boundary.}
\citet[Theorem~3.3]{wskr2023} establish a confidence sequence valid
uniformly over sample sizes $j \ge m_{\mathrm{arm}}$, where $j$ is
the per-arm count. Applied to the $n/2$ paired differences
$D_i = X_{T,i} - X_{C,i}$ at equal allocation, the CI becomes
\begin{equation}
  \bar{C}_n^{(m)}(\alpha)
  = \hat\delta_n \pm \hat\sigma_{D,n} \,
    \sqrt{\frac{\La + \log(n/m)}{n/2}},
  \qquad n \ge m,
  \label{eq:wskrCI}
\end{equation}
where $\hat\sigma_{D,n}$ is the sample standard deviation of the
paired differences and
$\La := \Psi^{-1}(1-\alpha)$ with $\Psi$ the CDF of the
Robbins-Siegmund limiting distribution \citep{robbinssiegmund1970}
($\La = 9.50, 7.67, 6.35, 4.93$ at $\alpha = 0.01, 0.025, 0.05, 0.10$).
We follow the symmetric-CI calibration of \citet{wskr2023}, in which
$\La$ controls the simultaneous two-sided miscoverage at $\alpha$, so
the one-sided test of $H_0: \delta \le 0$ has type-I error at most
$\alpha$.
The closed-form analysis below replaces $\hat\sigma_{D,n}$
with the known $\sigma_D = \sqrt{2}\,\sigma$, moving from the distribution-free
WSKR guarantee to a Gaussian approximation. The boundary in $Z_n$
units is $f_{\mathrm{W}}(n) = \sqrt{\La + \log(n/m)}$.

\textit{Maharaj boundary.}
\citet{maharaj2023} construct an asymptotic confidence sequence from
a Gaussian-mixture supermartingale calibrated at miscoverage $\alpha$,
with tuning constant
$\lambda_{\mathrm{M}}(\alpha) = -W_{-1}(-\alpha^2 / e) - 1$, where
$W_{-1}$ is the lower real branch of the Lambert-$W$ function
\citep{howardetal2021}. The boundary in $Z_n$ units is
\begin{equation}
  f_{\mathrm{M}}(n)
  = \sqrt{\frac{m}{n}\cdot
    \frac{2(\lambda_{\mathrm{M}} n/m + 1)}{\lambda_{\mathrm{M}}}
    \,\log\!\left(1 + \frac{\sqrt{\lambda_{\mathrm{M}} n/m + 1}}
                           {2\alpha}\right)}.
  \label{eq:fM}
\end{equation}

\textit{mSPRT boundary.}
The mixture sequential probability ratio test
\citep{johariopre2022} averages the likelihood ratio over a Gaussian
mixing distribution $N(0, \sigma_\tau^2)$ on the effect size,
producing a test statistic that coincides with the Bayes factor under
the same prior. Setting the mixing standard deviation to the MDE,
$\sigma_\tau = \delta_{\mathrm{MDE}}$, with
$\kappa_n = \sigma_\tau^2 / \mathrm{Var}(\hat\delta_n)
= \Delta_{\mathrm{MDE}}^2\, r\,n / (1+r)^2$,
the symmetric rejection rule $\mathrm{BF}_n > 1/\alpha$, equivalent
to $|Z_n| > f_{\mathrm{J}}(n)$, gives the boundary
\begin{equation}
  f_{\mathrm{J}}(n) = \sqrt{\frac{2(1+\kappa_n)}{\kappa_n}
  \left(\log\frac{1}{\alpha}
  + \tfrac{1}{2}\log(1+\kappa_n)\right)}.
  \label{eq:fJ}
\end{equation}
For the one-sided test $H_0: \delta \le 0$, we use the positive
branch $Z_n > f_{\mathrm{J}}(n)$. Since
$\{\exists n: Z_n > f_{\mathrm{J}}(n)\} \subseteq
\{\exists n: |Z_n| > f_{\mathrm{J}}(n)\}$, the one-sided type-I
error is at most $\alpha$. For the composite null $\delta \le 0$,
the crossing probability under any $\delta < 0$ is strictly smaller
than at $\delta = 0$ because $Z_n$ under $\delta < 0$ is
stochastically dominated by $Z_n$ under $\delta = 0$.
Unlike the WSKR and Maharaj boundaries, $f_{\mathrm{J}}$ depends on
the effect size through $\sigma_\tau$. When $\sigma_\tau = \delta_{\mathrm{MDE}}$
this dependence is absorbed into $\nz$. The closed-form $\kstarJ$
depends on $t_0$ through the monitoring window, but the dependence
is weak: $\kstarJ$ varies by less than $0.5\%$ across the range
$t_0 \in [0.001, 1]$. Because the positive branch of
$\mathrm{BF}_n > 1/\alpha$ is a stopping rule for declaring a
positive effect in Bayesian A/B testing \citep{johariopre2022},
$\kstarJ$ applies directly to sample-size planning for that decision.

\paragraph{Boundary in rescaled time.}
Rescaling normalises the fixed-sample endpoint to $t = 1$ and makes
the boundary crossing problem amenable to Brownian-motion analysis.
We work in the rescaled time variable $t = n/\nz$ and the rescaled
cumulative process $Y_t = \sqrt{t}\, Z_{t \nz}$. Write
$t_0 := m/\nz$ for the burn-in in rescaled time. Under $H_1$ with
$\delta = \delta_{\mathrm{MDE}}$ and the moment conditions of
\citet{wskr2023}, $Y_t$ is asymptotically Brownian motion with drift
$\mu_d = z_\alpha + z_\beta$ per unit $t$, started at
$Y_{t_0} \sim N(t_0 \mu_d, t_0)$
(Appendix~\ref{app:derivation}). Because $\mathrm{Var}(Z_n) = 1$
for any $r$ by \eqref{eq:Zn}, the rescaled process $Y_t$ has drift
$\mu_d$ and variance rate one regardless of $r$; the allocation ratio
is absorbed into $\nz$ via \eqref{eq:nz}. The correction factor
$\kstar(\alpha, \beta, t_0)$ therefore depends on $r$ only through
$t_0 = m/\nz(r)$; at equal $t_0$, $\kstar$ is identical across $r$.
The WSKR boundary becomes
\begin{equation}
  b_{\mathrm{W}}(t) = \sqrt{t\,(\La + \log(t/t_0))},
  \label{eq:bW}
\end{equation}
and the Maharaj boundary becomes
\begin{equation}
  b_{\mathrm{M}}(t)
  = \sqrt{t_0}\, g_{\mathrm{M}}(t/t_0),
  \label{eq:bM}
\end{equation}
where $g_{\mathrm{M}}(v)^2 = (2(\lambda_{\mathrm{M}} v + 1) /
\lambda_{\mathrm{M}})\, \log(1 + \sqrt{\lambda_{\mathrm{M}} v + 1}
/ (2\alpha))$.
With $\sigma_\tau = \delta_{\mathrm{MDE}}$, the mSPRT boundary becomes
\begin{equation}
  b_{\mathrm{J}}(t)^2
  = \frac{2(1 + t\mu_d^2)}{\mu_d^2}\,
    \left(\log\frac{1}{\alpha}
    + \tfrac{1}{2}\log(1 + t\mu_d^2)\right),
  \label{eq:bJ}
\end{equation}
where $\mu_d = z_\alpha + z_\beta$ (since
$\kappa_{t\nz} = \Delta^2 r\,t\,\nz/(1+r)^2 = t\mu_d^2$ by \eqref{eq:nz}).
The mSPRT boundary does not depend on $m$.
All three boundaries are concave on the relevant range. For
$b_{\mathrm{W}}$, $b_{\mathrm{W}}''(t) < 0$ on $[t_0, \infty)$
(Appendix~\ref{app:derivation}, A.1). For $b_{\mathrm{M}}$,
$b_{\mathrm{M}}''(t) < 0$ on $[t_0, \infty)$ for all
$\lambda_{\mathrm{M}} > 0$ and $\alpha \in (0, \tfrac{1}{2})$
(Appendix~\ref{app:derivation}, A.1). For $b_{\mathrm{J}}$,
concavity is verified numerically on $[t_0, k]$ at every
$(\alpha, \beta)$ pair in the grid; users applying the formula at
untabulated parameters should verify that $b_{\mathrm{J}}''(t) < 0$
on a grid in $[t_0, k]$.
The closed-form approximation requires concavity
(Section~\ref{sec:result}) and depends on the boundary only through $b(k)$
and $b'(k)$ at the planned endpoint.

\paragraph{Power.}
We write $t_0 = m/\nz$ for the burn-in, $t = 1$ for the
fixed-sample baseline, and $k$ for the planned endpoint, so
$n = k \cdot \nz$. The AV power at horizon $k \cdot \nz$ is
\begin{equation}
  \pi(k; \delta_{\mathrm{MDE}})
  := P(\tau \le k \cdot \nz \mid \delta = \delta_{\mathrm{MDE}}),
  \label{eq:pidef}
\end{equation}
the probability of a boundary crossing at any time from burn-in $m$
to the planned endpoint $k \cdot \nz$. With $m \ll \nz$, the
monitoring window $[m,\, k \cdot \nz]$ extends well before the
fixed-sample size.

\paragraph{Last-point sizing.}
The last-point rule sets $n = k \cdot \nz$ to the smallest value at
which the marginal rejection probability at the endpoint reaches
$1 - \beta$. Under $H_1$, $Y_k \sim N(k \mu_d, k)$, so this
probability is $\Phi\!\big((k \mu_d - b(k)) / \sqrt{k}\big)$.
Setting it equal to $1-\beta$ gives
$b(k) = \sqrt{k}\,(\sqrt{k}\,\mu_d - z_\beta)$.
For the WSKR boundary this reduces to
$\sqrt{\La + \log(k/t_0)} = \sqrt{k}\,\mu_d - z_\beta$,
and for each boundary the root $\knaive$ is found numerically.
Last-point sizing is conservative by construction, as the probability
of rejecting before the last planned time point is ignored.

\section{The corrected factor}
\label{sec:result}

The corrected factor is the smallest $k > t_0$ at which a tangent-line
approximation to the AV power \eqref{eq:pidef} equals $1 - \beta$.
Under the Brownian-motion approximation to the cumulative process
under $H_1$, this approximation depends on the boundary $b$ only
through $b(k)$ and $b'(k)$; accuracy depends on how closely the
tangent tracks the boundary on $[t_0, k]$ and is checked by simulation
in Section~\ref{sec:simulation}. Let
$\mu_d := z_\alpha + z_\beta$ and define the tangent of $b$ at the
planned endpoint $t = k$:
\begin{equation}
  L_{t_0}(k) := b(k) - b'(k)(k - t_0),
  \qquad
  s(k) := b'(k).
\end{equation}
Under concavity, the tangent lies above the boundary on $[t_0, k]$.
Set $T := k - t_0$, $c_x := (L_{t_0} - t_0 \mu_d)/\sqrt{t_0}$,
$\nu := \mu_d - s$, and
$\rho := -\sqrt{t_0}/\sqrt{T+t_0}$. Then
\begin{equation}
  \pi(k; \delta_{\mathrm{MDE}}) \approx \Phi(-c_x) + I_1(k) + I_2(k),
  \label{eq:piclosed}
\end{equation}
with
\begin{align}
  I_1(k) &= \Phi_2\!\left(c_x,\ \tfrac{\nu T - \sqrt{t_0}\,c_x}{\sqrt{T+t_0}};\ \rho\right),
  \label{eq:I1}\\
  I_2(k) &= e^{2\sqrt{t_0}\,\nu\,(c_x + \sqrt{t_0}\,\nu)}\,
           \Phi_2\!\left(c_x + 2\sqrt{t_0}\,\nu,\ \tfrac{-\nu T - \sqrt{t_0}\,c_x - 2t_0\nu}{\sqrt{T+t_0}};\ \rho\right).
  \label{eq:I2}
\end{align}
Here $\Phi_2(a, b; \rho)$ is the bivariate standard normal CDF with
correlation $\rho$.
The derivation (Appendix~\ref{app:derivation}) uses tangent
linearisation, the Bachelier first-passage formula, and Gaussian
completion-of-the-square to integrate over the burn-in initial value
$Y_{t_0} \sim N(t_0 \mu_d, t_0)$. The three terms have plain
interpretations:
$\Phi(-c_x)$ is the immediate-rejection probability at $t = t_0$
for the tangent-line surrogate,
$I_1$ is the probability of crossing the linearised boundary during
$(t_0, k]$, and $I_2$ is the Bachelier reflection correction. The
closed-form $\kstar$ is the smallest $k > t_0$ solving
$\pi_{\mathrm{closed}}(k) = 1-\beta$, where $\pi_{\mathrm{closed}}$
is the right-hand side of \eqref{eq:piclosed}. A bracket search
returns $\kstar$ at two bivariate normal CDF calls per
evaluation.

\paragraph{Boundary-specific slopes.}
For the WSKR boundary,
$b_{\mathrm{W}}'(k) = (\La + \log(k/t_0) + 1)\,/\,(2\,b_{\mathrm{W}}(k))$.
For the Maharaj boundary,
$b_{\mathrm{M}}'(k) = g_{\mathrm{M}}'(k/t_0)\,/\,\sqrt{t_0}$,
where $g_{\mathrm{M}}'(v) = (2 q + \xi/(2\alpha + \xi))\,/\,(2\,g_{\mathrm{M}}(v))$
with $\xi = \sqrt{\lambda_{\mathrm{M}} v + 1}$ and
$q = \log(1 + \xi/(2\alpha))$.
For the mSPRT boundary, with $\eta = k\mu_d^2$ and
$c = \log(1/\alpha)$,
$b_{\mathrm{J}}'(k) = (c + \tfrac{1}{2}\log(1+\eta) + \tfrac{1}{2})\,/\,
b_{\mathrm{J}}(k)$.
The closed form satisfies
$\pi_{\mathrm{closed}}(t_0) < 1-\beta$ and
$\pi_{\mathrm{closed}}(k) \to 1$ as $k \to \infty$. A bracket search
finds a unique root $\kstar$ at every $(\alpha, \beta)$ cell;
no second root was observed on the tested grid
(Appendix~\ref{app:derivation}).
Table~\ref{tab:kstar} reports $\kstar$ and the relative saving
$S = 100 \cdot (\knaive - \kstar)/\knaive$ at $m = 40$ and
$\Delta = 0.1$ on the $(\alpha, \beta)$ grid;
Section~\ref{sec:sensitivity} shows how $\kstar$ varies with $m$.
The three boundaries arise from different mathematical constructions,
so the factors should not be read as an efficiency ranking.

\begin{table}[t]
\centering
\small
\begin{tabular}{cc|cc|cc|cc}
\toprule
& & \multicolumn{2}{c|}{WSKR} & \multicolumn{2}{c|}{Maharaj} & \multicolumn{2}{c}{mSPRT} \\
$\alpha$ & $\beta$ & $\kstar$ & saving & $\kstarM$ & saving & $\kstarJ$ & saving \\
\midrule
0.010 & 0.05 & 1.789 & $8.3\%$  & 1.865 & $7.9\%$  & 1.593 & $9.0\%$ \\
0.010 & 0.10 & 1.859 & $9.3\%$  & 1.945 & $8.9\%$  & 1.647 & $9.9\%$ \\
0.010 & 0.20 & 1.964 & $10.8\%$ & 2.068 & $10.3\%$ & 1.732 & $11.3\%$ \\
0.025 & 0.05 & 1.954 & $9.3\%$  & 2.027 & $9.0\%$  & 1.706 & $10.2\%$ \\
0.025 & 0.10 & 2.051 & $10.5\%$ & 2.136 & $10.1\%$ & 1.779 & $11.4\%$ \\
0.025 & 0.20 & 2.202 & $12.4\%$ & 2.306 & $11.8\%$ & 1.897 & $13.1\%$ \\
0.050 & 0.05 & 2.153 & $10.3\%$ & 2.209 & $10.0\%$ & 1.834 & $11.5\%$ \\
0.050 & 0.10 & 2.288 & $11.7\%$ & 2.354 & $11.3\%$ & 1.930 & $12.9\%$ \\
0.050 & 0.20 & 2.504 & $13.8\%$ & 2.588 & $13.4\%$ & 2.092 & $15.0\%$ \\
0.100 & 0.05 & 2.450 & $11.5\%$ & 2.489 & $11.3\%$ & 2.033 & $13.2\%$ \\
0.100 & 0.10 & 2.652 & $13.2\%$ & 2.700 & $12.9\%$ & 2.172 & $15.0\%$ \\
0.100 & 0.20 & 2.993 & $15.9\%$ & 3.059 & $15.5\%$ & 2.419 & $17.5\%$ \\
\bottomrule
\end{tabular}
\caption{Closed-form sample-size factor and relative saving over the
last-point rule on the $(\alpha, \beta)$ grid at $m = 40$
and $\Delta = 0.1$ ($t_0 \approx 0.02$). All three $\kstar$ values
come from \eqref{eq:piclosed} with the corresponding $b(k)$ and
$b'(k)$ substituted. The mSPRT boundary does not depend on $m$
(Section~\ref{sec:setup}).}
\label{tab:kstar}
\end{table}

\section{Simulation}
\label{sec:simulation}

We validate $\kstar$, $\kstarM$, and $\kstarJ$ by direct Monte Carlo
simulation of all three boundaries applied to two-sample experiments. The
principal parameters are $\alpha = 0.05$, $1-\beta = 0.80$,
burn-in $m = 20$, and $B = 50{,}000$ replications per cell.
Full data-generating process and stopping rule are in
Appendix~\ref{app:simdesign}.
Simulation code is available at
\url{https://github.com/MSchultzberg/sequential-ssc-correction-approximation}.
Before validating power, we verify type-I error control by running
$B = 200{,}000$ replications under $H_0$ ($\delta = 0$): the
empirical one-sided rejection rate is below $\alpha$ for all three
boundaries at every cell in the grid.

\subsection{Empirical power}
\label{sec:simpower}

Table~\ref{tab:simpower} reports the empirical AV power
\begin{equation*}
  \hat\pi(n) = \frac{1}{B} \sum_{\ell=1}^{B}
  \mathbf{1}\!\big[\tau^{(\ell)} \le n\big]
\end{equation*}
at $\alpha = 0.05$, $1 - \beta = 0.80$ under the last-point and
corrected sample sizes for all three boundaries.

\begin{table}[ht]
\centering
\small
\begin{tabular}{lc cccccc}
\toprule
& & \multicolumn{2}{c}{WSKR} & \multicolumn{2}{c}{Maharaj} & \multicolumn{2}{c}{mSPRT} \\
\cmidrule(lr){3-4} \cmidrule(lr){5-6} \cmidrule(lr){7-8}
Outcome & $\Delta$ &
$\hat\pi_{\knaive}$ & $\hat\pi_{\kstar}$ &
$\hat\pi_{\knaiveM}$ & $\hat\pi_{\kstarM}$ &
$\hat\pi_{\knaiveJ}$ & $\hat\pi_{\kstarJ}$ \\
\midrule
Gaussian, $r{=}1$ & $0.1$ & $0.880$ & $0.826$ & $0.879$ & $0.816$ & $0.877$ & $0.806$ \\
Gaussian, $r{=}1$ & $0.2$ & $0.885$ & $0.819$ & $0.879$ & $0.815$ & $0.875$ & $0.803$ \\
Gaussian, $r{=}1$ & $0.3$ & $0.883$ & $0.816$ & $0.880$ & $0.813$ & $0.869$ & $0.801$ \\
Gaussian, $r{=}1$ & $0.5$ & $0.879$ & $0.814$ & $0.874$ & $0.804$ & $0.867$ & $0.797$ \\
\midrule
Gaussian, $r{=}1.5$ & $0.2$ & $0.884$ & $0.821$ & $0.879$ & $0.814$ & $0.873$ & $0.804$ \\
Gaussian, $r{=}2$ & $0.2$ & $0.883$ & $0.820$ & $0.880$ & $0.812$ & $0.874$ & $0.803$ \\
\bottomrule
\end{tabular}
\caption{Empirical AV power at $\alpha = 0.05$, $1-\beta = 0.80$,
$m = 20$, $B = 50{,}000$ replications per cell,
Gaussian DGP. The $r{=}1.5$ and $r{=}2$ rows use the same $\kstar$
as $r{=}1$ (computed at the $r$-specific $t_0$); empirical power is
within MC error of the $r{=}1$ values. Monte Carlo standard error
is below $0.0023$ at every cell.}
\label{tab:simpower}
\end{table}

Empirical power for Gaussian outcomes is close to and usually above
$1-\beta$, though finite-sample effects (CLT error and discrete
monitoring) can reduce it slightly below target at individual cells.
The tendency to overshoot $1-\beta$ has a geometric explanation:
under concavity, the tangent lies above the boundary on $[t_0, k]$,
so the linearised crossing probability is a lower bound on the true
Brownian crossing probability and $\kstar$ is slightly larger than
needed. The surplus under the last-point factors is a property of
last-point sizing, not of the boundary or the outcome distribution.

\subsection{Sample-size saving}

The relative saving of $\kstar$ over $\knaive$ is
\begin{equation}
  S(\alpha, \beta)
  := 100 \cdot \frac{\knaive(\alpha, \beta) - \kstar(\alpha, \beta)}
                      {\knaive(\alpha, \beta)}
  \quad \%.
  \label{eq:saving}
\end{equation}
Because $\kstar$ depends on $t_0$, the saving $S$ is not a single number
but varies with the experiment's effect size and burn-in.

The saving varies with the effect size because $t_0 = m/\nz$ depends
on $\Delta$: smaller effects give larger $\nz$ and hence smaller
$t_0$, wider boundaries, and lower savings.
Over the $(\alpha, \beta)$ grid (Table~\ref{tab:kstar}) and the
sensitivity analysis (Table~\ref{tab:sensitivity}), the saving ranges
from $8\%$ to $20\%$ across all three boundaries and the full range of
burn-in values. Full simulation design is in
Appendix~\ref{app:simdesign}.

\subsection{Production validation}

Figure~\ref{fig:prodsaving} shows the saving on $713$ metrics from the last 283
experiments using always-valid inference, taken from Confidence, Spotify's
commercial experimentation platform. The platform uses the Maharaj boundary with
Bonferroni corrections for multiple comparisons and multiple metrics, which is
why the production validation is for that boundary only. The median saving is
$9.5\%$ and the mean is $9.8\%$. The realised saving is lower than the $13.4\%$
at the $(\alpha, \beta) = (0.05, 0.20)$ entry of Table~\ref{tab:kstar}
because Bonferroni adjustment pushes the effective $\alpha_{\mathrm{adj}}$ and
$\beta_{\mathrm{adj}}$ toward the lower-saving corner of the grid. See \citet{schultzberg2026risk} and \citet{schultzberg2026bonf}
for details on the multiple-metric handling.

\begin{figure}[t]
\centering
\includegraphics[width=\linewidth]{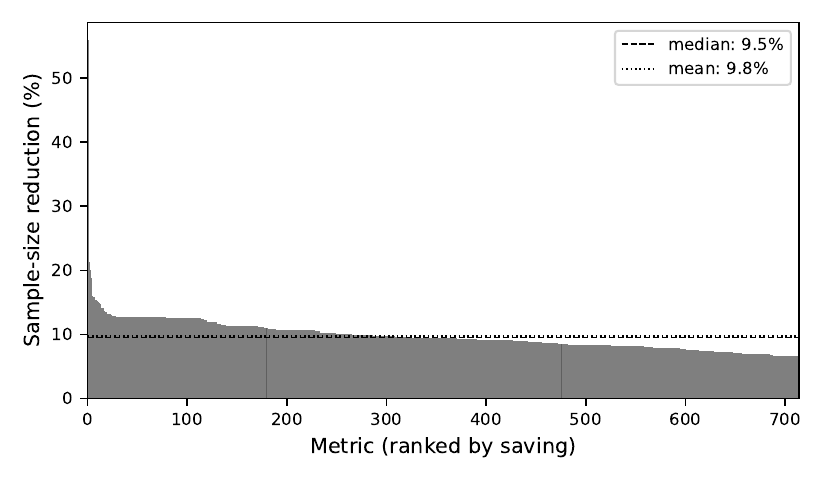}
\caption{Sample-size reduction from the corrected factor $\kstarM$
relative to the last-point rule across $713$ metrics from Spotify's
experimentation platform (Maharaj boundary). Each bar is one metric,
ranked by saving. Bonferroni corrections for multiple comparisons and
metrics shift the effective $(\alpha, \beta)$ toward the lower-saving
region of Table~\ref{tab:kstar}.}
\label{fig:prodsaving}
\end{figure}

\subsection{Sensitivity to burn-in}
\label{sec:sensitivity}

The burn-in $m$ controls where monitoring begins and affects $\kstar$
through two channels: the monitoring window $[t_0, k]$ widens as $m$
decreases, and the WSKR and Maharaj boundaries shift because $t_0$
enters the boundary formula.
Table~\ref{tab:sensitivity} reports $\kstar$ at
$(\alpha, 1-\beta) = (0.05, 0.80)$ for a range of total burn-in
values from $m = 20$ to $m = \nz$, where $m = \nz$ corresponds to
monitoring only from the fixed-sample size onward.

\begin{table}[ht]
\centering
\small
\begin{tabular}{rr|ccc|ccc}
\toprule
& & \multicolumn{3}{c|}{$\kstar$} & \multicolumn{3}{c}{saving (\%)} \\
$m$ & $t_0$ & WSKR & Maharaj & mSPRT & WSKR & Maharaj & mSPRT \\
\midrule
\multicolumn{8}{c}{$\Delta = 0.05$ \quad ($\nz = 9{,}892$)} \\
\midrule
   $20$  & $0.002$ & $2.96$ & $3.04$ & $2.09$ & $11.7$ & $11.4$ & $15.0$ \\
  $100$  & $0.010$ & $2.61$ & $2.69$ & $2.09$ & $13.3$ & $12.8$ & $15.0$ \\
  $200$  & $0.020$ & $2.45$ & $2.54$ & $2.09$ & $14.1$ & $13.6$ & $15.0$ \\
 $1000$  & $0.101$ & $2.09$ & $2.19$ & $2.09$ & $16.6$ & $15.7$ & $15.0$ \\
 $2000$  & $0.202$ & $1.92$ & $2.05$ & $2.09$ & $18.0$ & $16.6$ & $15.0$ \\
$\nz$   & $1.0$  & $1.59$ & $1.82$ & $2.10$ & $19.3$ & $16.5$ & $14.6$ \\
\midrule
\multicolumn{8}{c}{$\Delta = 0.1$ \quad ($\nz = 2{,}474$)} \\
\midrule
   $20$  & $0.008$ & $2.66$ & $2.74$ & $2.09$ & $13.0$ & $12.6$ & $15.0$ \\
   $40$  & $0.016$ & $2.50$ & $2.59$ & $2.09$ & $13.8$ & $13.4$ & $15.0$ \\
  $100$  & $0.040$ & $2.30$ & $2.39$ & $2.09$ & $15.1$ & $14.5$ & $15.0$ \\
  $200$  & $0.081$ & $2.14$ & $2.24$ & $2.09$ & $16.2$ & $15.4$ & $15.0$ \\
 $1000$  & $0.404$ & $1.76$ & $1.92$ & $2.09$ & $19.7$ & $17.4$ & $15.0$ \\
$\nz$   & $1.0$  & $1.59$ & $1.82$ & $2.10$ & $19.3$ & $16.5$ & $14.6$ \\
\bottomrule
\end{tabular}
\caption{Sensitivity of $\kstar$ to burn-in at $(\alpha, 1-\beta) = (0.05, 0.80)$.
As $m$ decreases, the WSKR and Maharaj boundaries widen (via the
$\log(n/m)$ term), increasing $\kstar$. The mSPRT boundary does not
depend on $m$, so $\kstarJ$ is essentially constant. The last row
($m = \nz$, $t_0 = 1$) recovers the convention of monitoring
only from the fixed-sample size onward.}
\label{tab:sensitivity}
\end{table}

At realistic burn-ins ($m \le 200$), the WSKR and Maharaj
factors are \emph{larger} than at $m = \nz$ because the
$\log(n/m)$ term in the boundary widens the confidence sequence when
$m$ is small. The wider boundary more than offsets the longer
monitoring window, so $\kstar$ increases as $m$ decreases. Despite
this, the \emph{saving} over the last-point rule remains substantial
(11--20\%) across the range. The mSPRT boundary does not depend on
$m$; $\kstarJ$ depends on $m$ only through the monitoring window
and varies by less than $0.5\%$ across the operating range.
The practical implication: the total sample budget $\kstar \cdot \nz$
for a confidence-sequence-based sequential test with realistic burn-in is $2$--$3$
times the fixed-sample size, considerably more than the factor of approximately $1.6$ for the WSKR
boundary at $m = \nz$ (last row of Table~\ref{tab:sensitivity}). The
correction from the last-point rule remains valuable but the baseline
cost of sequential inference with realistic burn-in is higher.

\section{Conclusion}

The last-point rule for sequential experiments oversizes by seven to
nine percentage points at standard operating characteristics. The
closed-form factor $\kstar$ approximately corrects this via tangent
linearisation, verified across three boundary families and an
extended $(\alpha, \beta)$ grid. The correction saves $8$--$20\%$
of the last-point sample budget across the operating range.
At low base rates ($p_C \le 0.01$), the CLT approximation degrades
and power falls below target (Table~\ref{tab:lowbase}).

With realistic burn-in ($m \ll \nz$), the total sample budget for a
confidence-sequence-based sequential test is $2$--$3$ times the
fixed-sample size, reflecting the cost of the wider boundary that
validity over $[m, \infty)$ requires.

The production validation on Spotify's experimentation platform
shows a median sample-size saving of $9.5\%$ across $713$ metrics.

\bibliography{paper}

\appendix

\section{Derivation of $\kstar$}
\label{app:derivation}

The derivation strategy (tangent linearisation of a curved boundary,
Bachelier first-passage on the linear surrogate, integration over the
initial value) follows \citet{siegmund1977}, applied here to the
specific boundary families arising in modern confidence sequences.

The proof uses the WSKR boundary as the worked example; the Maharaj
boundary follows by substituting $b_{\mathrm{M}}(k)$ and
$b_{\mathrm{M}}'(k)$ for $b_{\mathrm{W}}(k)$ and
$b_{\mathrm{W}}'(k)$ in the final expressions. The derivation holds
for any allocation ratio $r \ge 1$: with
$Z_n = \hat\delta_n\sqrt{r\,n}/(\sigma(1+r))$ we have
$\mathrm{Var}(Z_n) = 1$ for every $r$, and under $H_1$ the rescaled
process $Y_t = \sqrt{t}\,Z_{t\nz}$ satisfies
$E(Y_t) = \mu_d\,t$ and $\mathrm{Var}(Y_t) = t$ since
$E(Z_n) = \delta\sqrt{r\,n}/(\sigma(1{+}r))$ and
$n = t\,\nz$ absorbs $(1{+}r)^2/r$ through \eqref{eq:nz}. The
boundary in rescaled time, the Bachelier passage formula, and the
integration over $Y_{t_0}$ therefore depend on $r$ only through
$t_0 = m/\nz(r)$. At equal $t_0$, $\kstar$ is identical across $r$.

\paragraph{A.1. Process and boundary on the rescaled time axis.}
The proof works on the rescaled time variable $t = n/\nz$ and the
rescaled cumulative process
\begin{equation}
  Y_t := \sqrt{t}\, Z_{t \nz},
\end{equation}
where $Z_n = \hat\delta_n \sqrt{r\,n}/(\sigma(1{+}r))$ has unit
variance for every $r$ by \eqref{eq:Zn}. Therefore
$\mathrm{Var}(Y_t) = t \cdot \mathrm{Var}(Z_{t\nz}) = t$
on the rescaled axis; $Y$ has variance rate one in $t$.
Increments of $Y$ over disjoint $t$-intervals correspond to
non-overlapping blocks of observations, so they are
asymptotically independent under the i.i.d.\ assumption. Under $H_1$
with $\delta = \delta_{\mathrm{MDE}}$ and the moment conditions of
\citet{wskr2023}, $Y_t$ converges weakly to Brownian motion with constant
drift $\mu_d = z_\alpha + z_\beta$ per unit $t$
\citep[by the functional CLT;][Theorem~14.1]{billingsley1968}, with
$Y_{t_0} \sim N(t_0\mu_d, t_0)$ at the burn-in $t_0 = m/\nz$. The WSKR boundary
$f(j) = \sqrt{\La + \log(j/m)}$ acting on $\{Z_n\}$ transforms onto
$\{Y_t\}$ via
\begin{equation}
  Y_t > \sqrt{t}\, f(t \nz)
  = \sqrt{t} \cdot \sqrt{\La + \log(t/t_0)}
  = \sqrt{t (\La + \log(t/t_0))} =: b_{\mathrm{W}}(t),
  \qquad t \ge t_0.
  \label{eq:bdytransform}
\end{equation}
Writing $u = \La + \log(t/t_0)$, so $b_{\mathrm{W}}(t) = \sqrt{t\, u}$ and
$du/dt = 1/t$, the chain rule gives
\begin{equation*}
  b_{\mathrm{W}}'(t) = \frac{u + 1}{2\, b_{\mathrm{W}}(t)}
        = \frac{\La + \log(t/t_0) + 1}{2\sqrt{t\,(\La + \log(t/t_0))}},
  \qquad
  b_{\mathrm{W}}''(t) = -\frac{u^2 + 1}{4\, b_{\mathrm{W}}(t)^3} < 0
  \quad \text{on } [t_0, \infty),
\end{equation*}
so $b_{\mathrm{W}}$ is concave there.

For the Maharaj boundary, write $g(t) = b_{\mathrm{M}}(t)^2$ and
$\varrho = \sqrt{\lambda_{\mathrm{M}} t/t_0 + 1}\,/\,(2\alpha)$.
Since $b_{\mathrm{M}} = \sqrt{g}$, the sign of $b_{\mathrm{M}}''$
equals the sign of $2g\,g'' - (g')^2$. After substitution, this
reduces to showing $\varphi(\varrho) < 0$ for $\varrho > 0$, where
\[
  \varphi(\varrho)
  = \frac{2\varrho\log(1{+}\varrho) - \varrho^2}{(1{+}\varrho)^2}
    - 4\log^2(1{+}\varrho).
\]
The bound $\log(1{+}\varrho) < \varrho$ gives
$2\varrho\log(1{+}\varrho) - \varrho^2 < \varrho^2$, so the first
term is below $\varrho^2/(1{+}\varrho)^2$. The bound
$\varrho/(1{+}\varrho) < 2\log(1{+}\varrho)$ (verified by comparing
derivatives at $\varrho = 0$) yields
$\varrho^2/(1{+}\varrho)^2 < 4\log^2(1{+}\varrho)$, so
$\varphi(\varrho) < 0$ and $b_{\mathrm{M}}$ is strictly concave on
$[t_0, \infty)$.

\paragraph{A.2. Linearisation.}
Replace $b$ on $[t_0, k]$ by its tangent at $t = k$:
\begin{equation}
  L(t) = b(k) + s(t - k),
  \qquad
  s = b'(k),
  \qquad
  L_{t_0} = b(k) - s(k - t_0).
\end{equation}

\paragraph{A.3. Bachelier first-passage on the linear boundary.}
The classical Bachelier formula for Brownian motion $W_t$ with drift
$\mu$ and variance rate $1$ states that for a linear boundary
$\ell(t) = a + \mu_b t$ on $[0, T]$, with $W_0 = 0$ and $a > 0$,
\begin{equation*}
  P\!\left(\sup_{0 \le t \le T}\big(W_t - \ell(t)\big) \ge 0\right)
  = \Phi\!\left((\mu - \mu_b)\sqrt{T} - \tfrac{a}{\sqrt{T}}\right)
  + e^{-2 a (\mu_b - \mu)}\,
    \Phi\!\left(-(\mu - \mu_b)\sqrt{T} - \tfrac{a}{\sqrt{T}}\right),
\end{equation*}
\citep{bachelier1900,siegmund1985}. Substituting $\mu = \mu_d$ and
$\mu_b = s$ flips the sign in the reflection exponent since
$-(\mu_b - \mu) = \mu_d - s$. Translating time so that $t = t_0$
corresponds to time $0$ in the standard Bachelier statement, with horizon
$T = k - t_0$, drift $\mu = \mu_d$, and linear boundary slope
$\mu_b = s$, the first-passage probability of $Y$ over $L$ on $[t_0, k]$
conditional on $Y_{t_0} = x < L_{t_0}$ is
\begin{equation}
  P(\tau \le k \mid Y_{t_0} = x)
  = \Phi\!\left((\mu_d - s)\sqrt{T} - \tfrac{a}{\sqrt T}\right)
  + e^{2 a (\mu_d - s)}\,
    \Phi\!\left(-(\mu_d - s)\sqrt{T} - \tfrac{a}{\sqrt T}\right),
  \label{eq:bachelier}
\end{equation}
with $a = L_{t_0} - x > 0$. For $x \ge L_{t_0}$ the trajectory
already lies above the linearised boundary at $t = t_0$ and the
conditional probability is one.

\paragraph{A.4. Integration over the burn-in initial value.}
Substitute $u = (Y_{t_0} - t_0\mu_d)/\sqrt{t_0} \sim N(0, 1)$ with
standard normal density $\phi$. With
$c_x = (L_{t_0} - t_0\mu_d)/\sqrt{t_0}$ and $\nu = \mu_d - s$, the
event $Y_{t_0} \ge L_{t_0}$ becomes $\{u \ge c_x\}$ and contributes
$P(u \ge c_x) = \Phi(-c_x)$. The
remaining integral is
\begin{equation}
  \int_{-\infty}^{c_x} P(\tau \le k \mid Y_{t_0} = t_0\mu_d + \sqrt{t_0}\,u)\,
  \phi(u)\, du
  = I_1 + I_2,
\end{equation}
where the two pieces correspond to the two terms of
\eqref{eq:bachelier} with $a = \sqrt{t_0}\,(c_x - u)$. The reduction
uses the following identity.

\begin{lemma}[Gaussian convolution]
\label{lem:Jidentity}
For real $A_0$, $B_0$, and $c$,
\begin{equation}
  \int_{-\infty}^{c} \Phi(A_0 + B_0 u)\, \phi(u)\, du
  = \Phi_2\!\left(c,\ \tfrac{A_0}{\sqrt{1 + B_0^2}};\
                  -\tfrac{B_0}{\sqrt{1+B_0^2}}\right).
  \label{eq:Jidentity}
\end{equation}
\end{lemma}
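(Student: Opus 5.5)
The plan is to represent the integrand probabilistically and reduce the integral to a single bivariate normal probability. Let $U$ and $V$ be independent standard normals. For fixed $u$ we have $\Phi(A_0 + B_0 u) = P(V \le A_0 + B_0 u)$. Integrating against $\phi(u)$ over $u \le c$ then gives
\[
  \int_{-\infty}^{c} \Phi(A_0 + B_0 u)\,\phi(u)\,du
  = P\!\left(U \le c,\ V - B_0 U \le A_0\right),
\]
by Fubini and the independence of $U$ and $V$. This step is routine, since all integrands are bounded and nonnegative.

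Next I will standardise the second variable. Set $W := (V - B_0 U)/\sqrt{1+B_0^2}$. Because $(U, W)$ is a linear image of the Gaussian vector $(U, V)$, it is jointly Gaussian. Its moments are $E W = 0$ and $\mathrm{Var}(W) = (1 + B_0^2)/(1+B_0^2) = 1$. The covariance is $\mathrm{Cov}(U, W) = -B_0/\sqrt{1+B_0^2}$, because $\mathrm{Cov}(U,V)=0$ and $\mathrm{Var}(U)=1$. The event $\{V - B_0U \le A_0\}$ is the same as $\{W \le A_0/\sqrt{1+B_0^2}\}$. Hence the probability equals $P(U \le c,\ W \le A_0/\sqrt{1+B_0^2})$, which by definition is $\Phi_2\bigl(c, A_0/\sqrt{1+B_0^2}; -B_0/\sqrt{1+B_0^2}\bigr)$. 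This is exactly \eqref{eq:Jidentity}.

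There is no substantive obstacle. The only points needing care are the following.
\begin{itemize}
\item The sign of the correlation must be kept as $-B_0/\sqrt{1+B_0^2}$, not its negative, because the event is written as $\{V - B_0 U \le A_0\}$.
\item The degenerate case $B_0 = 0$ must be checked. There $\rho = 0$ and the right-hand side factorises as $\Phi(c)\Phi(A_0)$, matching the left-hand side directly.
\item The identity holds for all real $c$, including the limit $c \to \infty$, where it reduces to the classical formula $E\,\Phi(A_0 + B_0 U) = \Phi(A_0/\sqrt{1+B_0^2})$.
\end{itemize}

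This limiting case also gives a sanity check: differentiating both sides in $c$ gives $\Phi(A_0 + B_0 c)\phi(c)$ on both sides. That follows from the conditional law $W \mid U = c \sim N(\rho c, 1-\rho^2)$, and it can be used to confirm the constants.
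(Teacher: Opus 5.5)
Your proof is correct and follows essentially the same route as the paper. Both introduce an independent standard normal $V$, rewrite $\Phi(A_0 + B_0 u)$ as $P(V - B_0 u \le A_0 \mid u)$, and standardise $V - B_0 U$ to read off the correlation $-B_0/\sqrt{1+B_0^2}$; your extra checks (the case $B_0 = 0$, the limit $c \to \infty$, and the $c$-derivative) are accurate but not needed for the argument.
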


\begin{proof}
Let $V \sim N(0, 1)$ independent of $u$. Then
$\Phi(A_0 + B_0 u) = P(V \le A_0 + B_0 u \mid u) = P(V - B_0 u \le A_0
\mid u)$. The pair $(u, V - B_0 u)$ is jointly Gaussian with
$\mathrm{Var}(V - B_0 u) = 1 + B_0^2$ and
$\mathrm{Cov}(u, V - B_0 u) = -B_0$. Standardising the second
component gives correlation $-B_0 / \sqrt{1 + B_0^2}$, and the joint
event $\{u \le c,\ V - B_0 u \le A_0\}$ becomes $\{u \le c,\
\widetilde W \le A_0/\sqrt{1+B_0^2}\}$ with $\widetilde W$ unit
variance. The right-hand side of \eqref{eq:Jidentity} follows.
\end{proof}

Applying Lemma~\ref{lem:Jidentity} to the first term of
\eqref{eq:bachelier} with
$A_0 = \nu \sqrt T - \sqrt{t_0}\,c_x / \sqrt T$ and
$B_0 = \sqrt{t_0}/\sqrt T$ yields \eqref{eq:I1} after the substitutions
$1 + B_0^2 = (T+t_0)/T$,
$A_0/\sqrt{1+B_0^2} = (\nu T - \sqrt{t_0}\,c_x)/\sqrt{T+t_0}$,
and $-B_0/\sqrt{1+B_0^2} = -\sqrt{t_0}/\sqrt{T+t_0} = \rho$.

For the second term the integrand carries an extra factor
$e^{2\sqrt{t_0}(c_x - u)\nu}$. Completing the square in the Gaussian
density gives the algebraic identity
\begin{equation}
  e^{-2\sqrt{t_0}\,\nu u}\, \phi(u) = e^{2 t_0\nu^2}\, \phi(u + 2\sqrt{t_0}\,\nu),
\end{equation}
so the prefactor combines as $e^{2\sqrt{t_0}(c_x - u)\nu}\, \phi(u) =
e^{2\sqrt{t_0}\, c_x \nu + 2 t_0\nu^2}\, \phi(u + 2\sqrt{t_0}\,\nu)
= e^{2\sqrt{t_0}\,\nu(c_x + \sqrt{t_0}\,\nu)}\, \phi(u +
2\sqrt{t_0}\,\nu)$. The substitution $w = u + 2\sqrt{t_0}\,\nu$ shifts
the upper limit from $c_x$ to $c_x + 2\sqrt{t_0}\,\nu$.
Lemma~\ref{lem:Jidentity} with
$A_0 = -\nu \sqrt T - \sqrt{t_0}\,c_x/\sqrt T - 2 t_0\nu/\sqrt T$ and
$B_0 = \sqrt{t_0}/\sqrt T$ then yields \eqref{eq:I2}, with the leading
coefficient $e^{2\sqrt{t_0}\,\nu(c_x + \sqrt{t_0}\,\nu)}$. Adding the
three pieces gives \eqref{eq:piclosed}.

\paragraph{A.5. Existence and uniqueness of $\kstar$.}
At $k = t_0$ we have $T = 0$ and $\rho = -1$, so $\Phi_2(c_x, -c_x;
-1) = 0$ and $I_2(t_0) = 0$, giving
$\pi_{\mathrm{closed}}(t_0) = \Phi((\mu_d t_0 - b(t_0))/\sqrt{t_0})$.
Across $\alpha \in \{0.01, 0.025, 0.05, 0.10\}$ and $\beta \in \{0.05,
0.10, 0.20\}$, $\pi_{\mathrm{closed}}(t_0)$ is strictly less than
$1-\beta$ at every cell for all three boundaries. As $k \to \infty$,
$L_{t_0}(k) \to +\infty$, so $c_x \to +\infty$ and
$\Phi(-c_x) \to 0$. The second argument of $I_1$ grows without bound,
giving $I_1 \to 1$. The $\Phi_2$ factor in $I_2$ decays
exponentially, dominating the prefactor, so $I_2 \to 0$. Thus
$\pi_{\mathrm{closed}}(k) \to 1$. Monotonicity of
$\pi_{\mathrm{closed}}(k)$ in $k$ does not follow automatically
because the tangent surrogate changes with $k$ (the crossing events
for different $k$ are not nested). We evaluate
$\pi_{\mathrm{closed}}$ on a fine grid of $k$ values at every
$(\alpha, \beta)$ cell. At every cell, a bracket search
finds a unique root $\kstar < \knaive$; no second root was observed
on the tested grid. Uniqueness is a numerical observation, not a
theorem; we do not prove monotonicity of $\pi_{\mathrm{closed}}$ in
general.

\paragraph{A.6. Numerical verification.}
The closed form \eqref{eq:piclosed} agrees with the Monte Carlo
reference at the precision of the simulation. On the grid
$\alpha \in \{0.01, 0.025, 0.05, 0.10\}$ and
$\beta \in \{0.05, 0.10, 0.20\}$, the closed-form $\kstar$ matches
the Monte Carlo reference within $1.5\%$ relative error (median
$0.9\%$) for the WSKR boundary and within $0.93\%$ (median $0.64\%$)
for the Maharaj boundary. Relative to the finite-sample discrete
simulations, the closed-form root may be slightly smaller because of
CLT and discrete-monitoring residuals that reduce empirical power.
Verification scripts are in the \texttt{simulations.py} file.

\section{Simulation design}
\label{app:simdesign}

\paragraph{Data-generating process.}
Each Monte Carlo replication generates $n_T = \lfloor n/(1{+}r)
\rceil$ treatment and $n_C = n - n_T$ control observations for total
sample sizes $n$ from $m$ to
$n_{\max} = \lceil \knaive \cdot \nz \rceil$,
under one of three distribution families:
\begin{itemize}
  \item Gaussian. $X_{C,j} \sim N(\mu_C, \sigma^2)$,
        $X_{T,j} \sim N(\mu_C + \delta, \sigma^2)$, with $\sigma = 1$
        and $\delta$ chosen so that the standardised effect
        $\Delta \in \{0.1, 0.2, 0.3, 0.5\}$.
  \item Bernoulli. $X_{C,j} \sim \mathrm{Bernoulli}(p_C)$,
        $X_{T,j} \sim \mathrm{Bernoulli}(p_C + \Delta_p)$, with
        $p_C \in \{0.05, 0.20\}$ and $\Delta_p$ chosen so that
        $\Delta = \Delta_p / \sqrt{p_C(1-p_C)} \in \{0.1, 0.2, 0.3, 0.5\}$.
        The treatment and control arms have unequal variances under
        $H_1$; these rows are robustness checks outside the
        common-variance model.
  \item Log-normal. $X_{C,j} = \exp(\xi_{C,j})$ and
        $X_{T,j} = \exp(\xi_{T,j}) + \delta$ with independent
        $\xi_{C,j}, \xi_{T,j} \sim N(0, 1)$ and $\delta$ chosen to
        give standardised effect $\Delta \in \{0.1, 0.2, 0.3, 0.5\}$
        using the variance of $\exp(\xi)$.
\end{itemize}
Allocation ratios $r \in \{1, 1.5, 2\}$ are simulated by drawing
$n_T = n / (1 + r)$ treatment observations and $n_C = r \cdot n_T$
control observations.

\paragraph{Confidence sequence and stopping rule.}
For each replication, the test statistic $Z_n$ from \eqref{eq:Zn}
is evaluated at every total sample size $n$ at which both arms have
gained at least one new observation (step size $\lceil 1{+}r \rceil$).
The boundary is compared in $Z_n$ units: $Z_n > f(n)$, using the
WSKR, Maharaj, or mSPRT boundary as appropriate, with burn-in
$m = 20$ (Table~\ref{tab:kstar} uses $m = 40$).
The first $n$ at which the bound is exceeded is recorded as
$\tau^{(\ell)}$. The empirical power at horizon $n$ is
$\hat\pi(n) = (1/B) \sum_\ell \mathbf{1}[\tau^{(\ell)} \le n]$.

\paragraph{Parameter grid and replications.}
The grid in Tables~\ref{tab:simpower} fixes
$\alpha = 0.05$ and $1 - \beta = 0.80$ across three distributions,
allocation ratios $r \in \{1, 1.5, 2\}$, and effect sizes
$\Delta \in \{0.1, 0.2, 0.3, 0.5\}$. Each cell uses $B = 50{,}000$
replications. The Monte Carlo standard error on the empirical power
is below $0.0023$ at every cell. The broader
$(\alpha, \beta)$ grid in Section~\ref{sec:result} covers
$\alpha \in \{0.01, 0.025, 0.05, 0.10\}$ and
$\beta \in \{0.05, 0.10, 0.20\}$ at $\Delta = 0.2$ and Gaussian
outcomes, with $\kstar$ obtained from the closed form
$\pi_{\mathrm{closed}}(k) = 1 - \beta$. Monte Carlo is used only to
evaluate empirical power $\hat\pi(k \nz)$ at the resulting
closed-form sample size.

\paragraph{Burn-in.}
All simulations use $m = 20$ unless stated otherwise.
Section~\ref{sec:sensitivity} varies $m$ across
$\{20, 40, 100, 200, 1000, 2000, \nz\}$.

\paragraph{Seed and code.}
Simulation scripts and closed-form verification are in the
\texttt{simulations.py} file. Random seed $2026$.

\section{Extended $(\alpha, \beta)$ grid validation}
\label{app:extended}

Tables~\ref{tab:simpower} fix
$\alpha = 0.05$ and $1 - \beta = 0.80$. To verify that the closed
form does not degrade at extreme parameter values, we run the same
Monte Carlo validation on a wider grid.
For the WSKR boundary,
$\alpha \in \{0.01, 0.025, 0.05, 0.10\}$ (limited by the tabulated
$\La$ values).
For the Maharaj boundary, $\lambda_{\mathrm{M}}(\alpha)$ has a
closed form, so we extend to
$\alpha \in \{0.001, 0.005, 0.01, 0.025, 0.05, 0.10\}$.
Both grids use $\beta \in \{0.05, 0.10, 0.20\}$, covering target
power from $0.80$ to $0.95$. All cells use Gaussian outcomes at
$\Delta = 0.2$, $r = 1$, and $B = 50{,}000$ replications per cell.

Table~\ref{tab:extended_wskr} reports results for the WSKR boundary
and Table~\ref{tab:extended_maharaj} for the Maharaj boundary. The
gap column is $\hat\pi(\kstar) - (1-\beta)$. The closed form hits
target power within approximately $3$ percentage points across the
entire grid for all three boundaries.
Power accuracy does not degrade at $\alpha = 0.001$ or at
$1-\beta = 0.95$, though the saving is smaller at extreme $\alpha$
(e.g.\ $6.5\%$ at $\alpha = 0.001$, $1-\beta = 0.95$ for Maharaj).

\begin{table}[ht]
\centering
\small
\begin{tabular}{cccccccc}
\toprule
$\alpha$ & $1-\beta$ & $\knaive$ & $\kstar$ & saving &
$\hat\pi(\knaive)$ & $\hat\pi(\kstar)$ & gap \\
\midrule
$0.010$ & $0.95$ & $1.885$ & $1.722$ & $8.7\%$  & $0.971$ & $0.951$ & $+0.001$ \\
$0.010$ & $0.90$ & $1.975$ & $1.783$ & $9.7\%$  & $0.939$ & $0.906$ & $+0.006$ \\
$0.010$ & $0.80$ & $2.113$ & $1.873$ & $11.4\%$ & $0.870$ & $0.807$ & $+0.007$ \\
$0.025$ & $0.95$ & $2.073$ & $1.870$ & $9.8\%$  & $0.972$ & $0.953$ & $+0.003$ \\
$0.025$ & $0.90$ & $2.198$ & $1.954$ & $11.1\%$ & $0.940$ & $0.908$ & $+0.008$ \\
$0.025$ & $0.80$ & $2.396$ & $2.082$ & $13.1\%$ & $0.879$ & $0.819$ & $+0.019$ \\
$0.050$ & $0.95$ & $2.299$ & $2.050$ & $10.8\%$ & $0.975$ & $0.957$ & $+0.007$ \\
$0.050$ & $0.90$ & $2.471$ & $2.165$ & $12.4\%$ & $0.946$ & $0.913$ & $+0.013$ \\
$0.050$ & $0.80$ & $2.755$ & $2.349$ & $14.8\%$ & $0.886$ & $0.821$ & $+0.021$ \\
$0.100$ & $0.95$ & $2.637$ & $2.314$ & $12.3\%$ & $0.977$ & $0.958$ & $+0.008$ \\
$0.100$ & $0.90$ & $2.895$ & $2.486$ & $14.2\%$ & $0.952$ & $0.918$ & $+0.018$ \\
$0.100$ & $0.80$ & $3.346$ & $2.772$ & $17.2\%$ & $0.893$ & $0.831$ & $+0.031$ \\
\bottomrule
\end{tabular}
\caption{Extended grid validation, WSKR boundary. Gaussian DGP,
$\Delta = 0.2$, $r = 1$, $m = 20$, $B = 50{,}000$. MC SE is
below $0.002$ at every cell.}
\label{tab:extended_wskr}
\end{table}

\begin{table}[ht]
\centering
\small
\begin{tabular}{cccccccc}
\toprule
$\alpha$ & $1-\beta$ & $\knaiveM$ & $\kstarM$ & saving &
$\hat\pi(\knaiveM)$ & $\hat\pi(\kstarM)$ & gap \\
\midrule
$0.001$ & $0.95$ & $1.699$ & $1.590$ & $6.5\%$  & $0.967$ & $0.951$ & $+0.001$ \\
$0.001$ & $0.90$ & $1.757$ & $1.632$ & $7.1\%$  & $0.931$ & $0.904$ & $+0.004$ \\
$0.001$ & $0.80$ & $1.843$ & $1.693$ & $8.1\%$  & $0.856$ & $0.804$ & $+0.004$ \\
$0.005$ & $0.95$ & $1.861$ & $1.719$ & $7.6\%$  & $0.969$ & $0.952$ & $+0.002$ \\
$0.005$ & $0.90$ & $1.944$ & $1.779$ & $8.5\%$  & $0.936$ & $0.905$ & $+0.005$ \\
$0.005$ & $0.80$ & $2.069$ & $1.867$ & $9.8\%$  & $0.859$ & $0.806$ & $+0.006$ \\
$0.010$ & $0.95$ & $1.961$ & $1.799$ & $8.3\%$  & $0.969$ & $0.954$ & $+0.004$ \\
$0.010$ & $0.90$ & $2.061$ & $1.871$ & $9.2\%$  & $0.936$ & $0.907$ & $+0.007$ \\
$0.010$ & $0.80$ & $2.216$ & $1.979$ & $10.7\%$ & $0.865$ & $0.807$ & $+0.007$ \\
$0.025$ & $0.95$ & $2.146$ & $1.945$ & $9.4\%$  & $0.972$ & $0.953$ & $+0.003$ \\
$0.025$ & $0.90$ & $2.282$ & $2.041$ & $10.6\%$ & $0.939$ & $0.908$ & $+0.008$ \\
$0.025$ & $0.80$ & $2.500$ & $2.189$ & $12.4\%$ & $0.873$ & $0.815$ & $+0.015$ \\
$0.050$ & $0.95$ & $2.354$ & $2.107$ & $10.5\%$ & $0.973$ & $0.954$ & $+0.004$ \\
$0.050$ & $0.90$ & $2.537$ & $2.234$ & $11.9\%$ & $0.944$ & $0.908$ & $+0.008$ \\
$0.050$ & $0.80$ & $2.839$ & $2.437$ & $14.2\%$ & $0.879$ & $0.820$ & $+0.020$ \\
$0.100$ & $0.95$ & $2.677$ & $2.356$ & $12.0\%$ & $0.975$ & $0.957$ & $+0.007$ \\
$0.100$ & $0.90$ & $2.945$ & $2.539$ & $13.8\%$ & $0.948$ & $0.913$ & $+0.013$ \\
$0.100$ & $0.80$ & $3.413$ & $2.846$ & $16.6\%$ & $0.889$ & $0.825$ & $+0.025$ \\
\bottomrule
\end{tabular}
\caption{Extended grid validation, Maharaj boundary. Gaussian DGP,
$\Delta = 0.2$, $r = 1$, $m = 20$, $B = 50{,}000$. MC SE is
below $0.002$ at every cell. The Maharaj grid
extends to $\alpha = 0.001$ because $\lambda_{\mathrm{M}}(\alpha)$
has a closed form for any $\alpha$.}
\label{tab:extended_maharaj}
\end{table}

\begin{table}[ht]
\centering
\small
\begin{tabular}{cccccccc}
\toprule
$\alpha$ & $1-\beta$ & $\knaiveJ$ & $\kstarJ$ & saving &
$\hat\pi(\knaiveJ)$ & $\hat\pi(\kstarJ)$ & gap \\
\midrule
$0.010$ & $0.95$ & $1.750$ & $1.593$ & $9.0\%$  & $0.970$ & $0.952$ & $+0.002$ \\
$0.010$ & $0.90$ & $1.829$ & $1.647$ & $9.9\%$  & $0.937$ & $0.902$ & $+0.002$ \\
$0.010$ & $0.80$ & $1.953$ & $1.732$ & $11.3\%$ & $0.865$ & $0.799$ & $-0.001$ \\
$0.025$ & $0.95$ & $1.901$ & $1.706$ & $10.2\%$ & $0.973$ & $0.952$ & $+0.002$ \\
$0.025$ & $0.90$ & $2.008$ & $1.779$ & $11.4\%$ & $0.939$ & $0.904$ & $+0.004$ \\
$0.025$ & $0.80$ & $2.183$ & $1.897$ & $13.1\%$ & $0.867$ & $0.802$ & $+0.002$ \\
$0.050$ & $0.95$ & $2.072$ & $1.834$ & $11.5\%$ & $0.973$ & $0.952$ & $+0.002$ \\
$0.050$ & $0.90$ & $2.217$ & $1.930$ & $12.9\%$ & $0.943$ & $0.905$ & $+0.005$ \\
$0.050$ & $0.80$ & $2.460$ & $2.092$ & $15.0\%$ & $0.873$ & $0.806$ & $+0.006$ \\
$0.100$ & $0.95$ & $2.343$ & $2.033$ & $13.2\%$ & $0.974$ & $0.952$ & $+0.002$ \\
$0.100$ & $0.90$ & $2.556$ & $2.172$ & $15.0\%$ & $0.945$ & $0.904$ & $+0.004$ \\
$0.100$ & $0.80$ & $2.933$ & $2.419$ & $17.5\%$ & $0.879$ & $0.804$ & $+0.004$ \\
\bottomrule
\end{tabular}
\caption{Extended grid validation, mSPRT boundary with $\sigma_\tau = \delta_{\mathrm{MDE}}$. Gaussian DGP, $\Delta = 0.2$,
$r = 1$, $m = 20$, $B = 50{,}000$. MC SE is below $0.002$
at every cell.}
\label{tab:extended_msprt}
\end{table}

\begin{table}[ht]
\centering
\small
\begin{tabular}{cccccl}
\toprule
$p_C$ & $\frac{\nz}{2} p_C$ & $\hat\pi(\knaive)$ & $\hat\pi(\kstar)$ & gap & \\
\midrule
$0.20$  & $247$ & $0.872$ & $0.795$ & $-0.005$ & \\
$0.05$  & $62$  & $0.860$ & $0.788$ & $-0.012$ & \\
$0.01$  & $12$  & $0.841$ & $0.768$ & $-0.032$ & CLT strained \\
$0.001$ & $1$   & $0.799$ & $0.727$ & $-0.073$ & CLT strained \\
\bottomrule
\end{tabular}
\caption{Low-base-rate Bernoulli validation (WSKR boundary).
$\alpha = 0.05$, $1-\beta = 0.80$, $\Delta = 0.1$, $r = 1$,
$B = 50{,}000$. The column $(\nz/2) \cdot p_C$ is the expected number of
successes per arm at the fixed-sample size ($r = 1$, so each arm
receives $\nz / 2$ observations). Below roughly $10$ expected
successes the CLT approximation degrades and the closed form
undershoots target power.}
\label{tab:lowbase}
\end{table}

\end{document}